\documentclass[11pt,a4paper]{article}
\usepackage[lmargin=1.0in,rmargin=1.0in,bottom=1.0in,top=1.0in,twoside=False]{geometry}
\usepackage[english]{babel}
\usepackage[shortlabels,]{enumitem}
\usepackage{utf8math}
\input{insbox}
\usepackage{fontaxes}
\usepackage{trimspaces}
\usepackage{nccfoots}
\usepackage{setspace}
\usepackage{inconsolata}
\usepackage{libertine}
\usepackage{amsmath,amssymb,amsthm}
\usepackage{mathtools}
\usepackage{comment}
\usepackage[all,defaultlines=4]{nowidow}
\usepackage{microtype}
\usepackage{mathrsfs}
\usepackage{mathtools}
\usepackage{marvosym}
\usepackage{thm-restate}
\usepackage{tikz}
\usepackage{todonotes}
\usetikzlibrary{calc}
\usetikzlibrary{shapes}
\definecolor{darkgreen}{rgb}{10,117,28}
\usepackage{wrapfig}
\usepackage{diagbox}
\usepackage{stackengine}
\usepackage[ruled,vlined]{algorithm2e}
\usepackage{algorithmicx}

\definecolor{blue}{rgb}{0.1,0.2,0.5}
\definecolor{brown}{rgb}{0.6,0.6,0.2}
\usepackage[ocgcolorlinks, linkcolor={blue}, citecolor={blue}]{hyperref}
\usepackage{cleveref}

\crefformat{page}{#2page~#1#3}%
\Crefformat{page}{#2Page~#1#3}%
\crefformat{equation}{#2(#1)#3}%
\Crefformat{equation}{#2(#1)#3}%
\crefformat{figure}{#2Figure~#1#3}%
\Crefformat{figure}{#2Figure~#1#3}%
\crefformat{section}{#2Section~#1#3}
\Crefformat{section}{#2Section~#1#3}
\crefformat{chapter}{#2Chapter~#1#3}
\Crefformat{chapter}{#2Chapter~#1#3}
\crefformat{chapter*}{#2Chapter~#1#3}
\Crefformat{chapter*}{#2Chapter~#1#3}
\crefformat{part}{#2Part~#1#3}
\Crefformat{part}{#2Part~#1#3}
\crefformat{enumi}{#2(#1)#3}
\Crefformat{enumi}{#2(#1)#3}



\newtheorem{theorem}{Theorem}[section]
\crefformat{theorem}{#2Theorem~#1#3}
\Crefformat{theorem}{#2Theorem~#1#3}
\newtheorem{lemma}[theorem]{Lemma}
\crefformat{lemma}{#2Lemma~#1#3}
\Crefformat{lemma}{#2Lemma~#1#3}

\crefformat{conjecture}{#2Conjecture~#1#3}
\Crefformat{conjecture}{#2Conjecture~#1#3}

\newcommand{\newtheoremwithcrefformat}[2]{%
  \newtheorem{#1}{#2}[section]%
  \crefformat{#1}{##2\MakeUppercase#1~##1##3}%
  \Crefformat{#1}{##2\MakeUppercase#1~##1##3}%
}

\newtheoremwithcrefformat{proposition}{Proposition}
\newtheoremwithcrefformat{observation}{Observation}
\newtheoremwithcrefformat{corollary}{Corollary}
\newtheoremwithcrefformat{claim}{Claim}
\newtheoremwithcrefformat{example}{Example}
\newtheoremwithcrefformat{remark}{Remark}
\newtheoremwithcrefformat{definition}{Definition}


\makeatletter
\def\ifenv#1{
   \def\@tempa{#1}%
   \ifx\@tempa\@currenvir
      \expandafter\@firstoftwo
    \else
      \expandafter\@secondoftwo
   \fi
}
\let\wfs@comment@comment\comment
\let\comment\@undefined
\newcommand{\untoto}{\let\toto\@undefined}
\usepackage{changes}
\let\wfs@changes@comment\comment
\let\comment\@undefined

\newcommand\comment{%
    \ifthenelse{\equal{\@currenvir}{comment}}
    {\wfs@comment@comment}
    {\wfs@changes@comment}%
}
\makeatother

\usepackage{anyfontsize}

\renewcommand{\phi}{\varphi}

\newcommand{\capprox}{\ensuremath{c\mathtt{-Approx}}}
\newcommand{\PTAS}{\mathtt{PTAS}}
\newcommand{\simplify}{\ensuremath{\mathtt{Simplify}}}
\newcommand{\OPT}{\ensuremath{\text{OPT}}}
\newcommand{\T}{\ensuremath{\text{T}}}
\newcommand{\E}{\ensuremath{\text{E}}} 

\renewcommand{\leq}{\leqslant}
\renewcommand{\geq}{\geqslant}

\newcommand{\tup}[1]{\bar{#1}}

\newcommand{\R}{\mathbb{R}}

\usepackage{stmaryrd}

\newcommand{\poly}{\mathsf{poly}}

\DeclareMathOperator{\NP}{NP}
\DeclareMathOperator{\DTIME}{DTIME}
\DeclareMathOperator{\polylog}{polylog}

\usepackage{accents}

\renewcommand{\R}{\mathcal{R}}
\renewcommand{\S}{\mathcal{S}}
\renewcommand{\L}{\mathcal{L}}
\newcommand{\stabbing}{\textsc{Stabbing}\xspace}


\makeatletter
\newcommand{\@abbrev}[3]{
  \def\c@a@def##1{
      \if ##1.
        \relax
      \else
        \@ifdefinable{\@nameuse{#1##1}}{\@namedef{#1##1}{#2##1}}
        \expandafter\c@a@def
      \fi
    }
  \c@a@def #3.
}
\@abbrev{bb}{\mathbb}{ABCDEFGHIJKLMNOPQRSTUVWXYZ}
\@abbrev{bf}{\mathbf}{ABCDEFGHIJKLMNOPQRSTUVWXYZabcdefghijklmnopqrstuvwxyz}
\@abbrev{bit}{\boldsymbol}{ABCDEFGHIJKLMNOPQRSTUVWXYZabcdefghijklmnopqrstuvwxyz}
\@abbrev{cal}{\mathcal}{ABCDEFGHIJKLMNOPQRSTUVWXYZ}
\@abbrev{frak}{\mathfrak}{ABCDEFGHIJKLMNOPQRSTUVWXYZabcdefghijklmnopqrstuvwxyz}
\@abbrev{rm}{\mathrm}{ABCDEFGHIJKLMNOPQRSTUVWXYZabcdefghijklmnopqrstuvwxyz}
\@abbrev{scr}{\mathscr}{ABCDEFGHIJKLMNOPQRSTUVWXYZ}
\@abbrev{sf}{\mathsf}{ABCDEFGHIJKLMNOPQRSTUVWXYZabcdefghijklmnopqrstuvwxyz}
\@abbrev{Alg}{\Alg}{ABCDEFGHIJKLMNOPQRSTUVWXYZ}
\@abbrev{Str}{\Str}{ABCDEFGHIJKLMNOPQRSTUVWXYZ}
\@abbrev{set}{\mathbb}{ABCDEFGHIJKLMNOPQRSTUVWXYZ}
\@abbrev{tup}{\tup}{ABCDEFGHIJKLMNOPQRSTUVWXYZabcdefghijklmnopqrstuvwxyz}
\@abbrev{tld}{\tilde}{ABCDEFGHIJKLMNOPQRSTUVWXYZabcdefghijklmnopqrstuvwxyz}
\makeatother

\usepackage{wasysym}
\usepackage{xspace}




\bibliographystyle{abbrv}

\title{A QPTAS for stabbing rectangles}

\author{
Friedrich Eisenbrand\thanks{EPFL, Switzerland, \texttt{friedrich.eisenbrand@epfl.ch}}
\and
Martina Gallato\thanks{EPFL, Switzerland, \texttt{martina.gallato@epfl.ch}}
\and
Ola Svensson\thanks{EPFL, Switzerland, \texttt{ola.svensson@epfl.ch}}
\and
Moritz Venzin\thanks{EPFL, Switzerland, \texttt{moritz.venzin@epfl.ch}}
}

\date{}

\begin{document}
\maketitle

\begin{abstract}

\noindent   
We consider the following geometric optimization problem: Given $ n $
axis-aligned rectangles in the plane, the goal is to find a set of
horizontal segments of minimum total length such that each rectangle
is \emph{stabbed}. A segment stabs a rectangle if it intersects both its
left and right edge. As such, this \emph{stabbing} problem falls into the category of
\emph{weighted geometric set cover problems} for which techniques that
improve upon the general $Θ(\log n)$-approximation guarantee have
received a lot of attention in the literature.

Chan at al.~(2018)
have shown that rectangle  stabbing is NP-hard and that it admits a constant-factor approximation
algorithm based on Varadarajan's \emph{quasi-uniform sampling method}. 

In this work we make progress on rectangle stabbing 
on two fronts. First, we
present a \emph{quasi-polynomial time approximation scheme (QPTAS)} for
rectangle stabbing. Furthermore, we provide a simple $8$-approximation
algorithm  that avoids the framework of
Varadarajan.
This settles two  open problems raised by Chan et al.~(2018).

%


%
%
\end{abstract}

\pagebreak
\section{Introduction}\label{sec:intro}

We consider the following \stabbing problem: Given a set $\mathcal{R} = \{R_1, \cdots, R_n\}$ of axis-aligned rectangles in the plane, the task is to find a set of horizontal line segments of minimal total length such that all rectangles are stabbed.  A rectangle is stabbed if a line-segment in the solution intersects both its left and right edge, see Figure~\ref{fig:stab}.  This natural geometric optimization problem was introduced  by Chan et al.~\cite{Stabbing_Chan}. \stabbing can be understood as a geometric interpretation of various  combinatorial optimization problems such as \emph{frequency assignment},  \emph{message scheduling} with time-windows on a directed path, or problems in network design, see, e.g.,~\cite{Stabbing_Chan,becchetti2009latency,gmmn}.  

%
\begin{figure}[h]
	\centering
	\includegraphics[width=.40\textwidth]{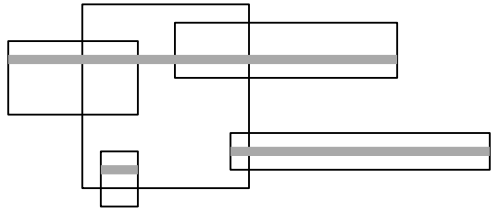}
	\caption{An instance and   solution of \stabbing}
	\label{fig:stab}
\end{figure}

\noindent 
\stabbing in turn can be interpreted as a \emph{geometric set cover problem}, in which the rectangles are the elements and the line-segments are the sets. An element (rectangle) is contained in a  set (line segment), if it is stabbed by the line segment. This immediately implies that there is a  $O(\log n )$-approximation algorithm for \stabbing~\cite{chvatal1979greedy,lovasz1975ratio}.

Improving upon the $\log n$-approximation for set cover in geometric settings has been an important area of research in computational geometry.  One successful approach~\cite{bronnimann1995almost} for \emph{unweighted} geometric  set cover is via \emph{$ε$-nets}~\cite{haussler1987}  in range-spaces of bounded \emph{VC-dimension}. Since $ε$-nets are of linear size in certain geometric settings~\cite{clarkson2007improved,matouvsek1990net,pyrga2008new}, constant factor approximation algorithms can be obtained via linear programming in these cases. 
Another very successful approach to tackle geometric set cover problems is the \emph{quasi-uniform sampling} technique of Varadarajan~\cite{Varaquasiunif}. It gives rise to a sub-logarithmic approximation algorithms for geometric set cover problems of small union complexity. The technique was   improved by Chan et al.~\cite{shallow} which then yields constant factor approximation algorithms for weighted geometric set cover problems of small \emph{shallow cell complexity}. This  is the case in the \emph{weighted disk cover} problem for example.

\medskip 
\noindent 
The state-of-the-art for \stabbing is as follows. Chan et al.~\cite{Stabbing_Chan} provide a \emph{constant-factor} approximation algorithm for \stabbing that is based on a decomposition technique and the framework of quasi-uniform sampling. More precisely, they show how to decompose \stabbing into two set cover instances of small shallow cell complexity for which the technique of Varadarajan~\cite{Varaquasiunif} and its improvement by Chan et al.~\cite{shallow} yields a constant factor approximation. The authors show furthermore that \stabbing is NP-hard via a reduction from \emph{planar vertex cover}. 

\subsubsection*{Our contribution}

In this paper, we provide the following results.  
\begin{enumerate}[i)]
\item  We show that there is a PTAS for instances of  \stabbing for which the ratio between the widths of the rectangles is bounded by a constant.
  \label{item:1}
\item This technique can then be recursively applied to yield a quasi-polynomial time approximation scheme (QPTAS) for \stabbing in general.\label{item:2} The running time of our algorithm is $n^{O(\log^3(n)/\epsilon^2)}$. This shows that \stabbing is not APX-hard unless $\NP ⊆ \DTIME(2^{\polylog(n)})$. 
 \item  We provide a simple $8$-approximation algorithm for \stabbing. First, we round the instance to a  \emph{laminar} instance, i.e., an instance in which the projections of the rectangles to the $x$-axis yields a laminar family of intervals.  This laminar instance is then  solved optimally via dynamic programming. \label{item:3}
\end{enumerate}
The contributions \ref{item:2}  and \ref{item:3} settle two open problems raised  in~\cite{Stabbing_Chan}.

\subsection*{Related work}
Approximation and hardness of approximation for geometric packing and covering problems are very active areas of research. Even for simple geometric shapes in the plane such as triangles, rectangles, disks, etc., these problems are notoriously difficult and have led to the development of a wide range of interesting techniques.

A recent breakthrough result in this area is the QPTAS for weighted independent set of rectangles by Adamaszek and  Wiese~\cite{adamaszek2013approximation}. Their approach is based on the existence of \emph{balanced cuts} with small complexity that intersect only few rectangles of the solution. This technique has proved to be very useful in various geometric settings, see, e.g.,~\cite{adamaszek2013approximation,adamaszek2019approximation,pilipczuk2020quasi}.
Interestingly, this technique was adapted by Mustafa et al.~\cite{mustafa2015quasi, mustafa2014settling} to give a QPTAS for many geometric set cover problems in the plane. This already rules out APX-hardness under reasonable complexity theoretic assumptions but obtaining a \emph{polynomial}-time approximation scheme for many of these problems remains elusive. A notable exception is the PTAS of Mustafa and Ray~\cite{mustafa2010improved} for the problem of piercing pseudo-disks. Their approach relies on local search and breaks the constant-factor barrier that is inherent in the direct application of the technique of Brönnimann and Goodrich~\cite{bronnimann1995almost}. 
On the other hand, Chan and Grant~\cite{chan2014exact} show that the set cover and the hitting set problem are already APX-hard for very simple geometric objects such as rectangles and perturbed unit squares. Indeed, for many simple objects in the plane, obtaining a polynomial time, constant-factor approximation remains an important open problem. For the problem of maximum independent set of rectangles, this was recently achieved in a breakthrough work by Mitchell~\cite{mitchell2021approximating}. 
\smallskip

To put our work into perspective, we also rely on a decomposition technique to obtain our QPTAS. However, we do not rely on the balanced cut framework. Instead we use a variation of the \emph{shifted grid} technique by Hochbaum and Maass~\cite{hochbaum1985approximation} that we combine with a simple charging scheme and careful guessing.

\subsection*{Notation and preliminaries}

An instance of \stabbing is specified by a set of $n$ axis-parallel rectangles $\mathcal{R} = \{R_1, \cdots, R_n\}$. Each rectangle is described by the $ y $-coordinates of its  bottom and top boundaries ($ y^b_{i} $ and $ y^t_{i} $), and by the $ x $-coordinates of its left and right boundaries ($ x^l_{i} $ and $ x^r_{i} $), i.e. $ R_i= [x^l_i, x^r_i] \times [y^b_i, y^t_i] $. 

We refer to $w(R_i):= x^r_{i}- x^l_{i}$ as the width of the rectangle $R_i$ and use $w(\R) = \max_i w(R_i)$ to denote the width of the widest rectangle in $\R$.  A solution consists of a set of horizontal line segments $\mathcal{S}$, so that each rectangle in $\mathcal{R}$ is fully \textit{stabbed} by at least one segment $ s \in \mathcal{S} $, where a segment stabs a rectangle if it intersects both its left and its right edge. The objective is to minimize the total length of the selected set of segments. For convenience, we denote by $ |\S| $ the total length of the segments in $ \S $, i.e. $ |\S| =\sum_{s \in \S} |s| $. Finally, given an instance of stabbing specified by a set of rectangles $ \R $, we denote by $ \OPT(\R) $ the total length of an optimal solution to stab all rectangles. Whenever clear, we omit $ \R $ and only write $ \OPT $.  
\newline 
It is easy to see that any solution can be transformed into a set of segments of the form $ [x^l_{i}, x^r_{j}] \times [y^t_k] $ for some $i, j, k \in [n]$,   
without increasing its total length. If a segment does not start and stop on such $ x $ coordinates, then the solution can be shortened without changing its feasibility. Furthermore, any horizontal segment can be shifted up vertically to the closest $ y^t $ without changing the set of rectangles it stabs. This gives a total number of $O(n^3)$ candidate segments. 

We will assume that all rectangles lie in a bounding box of the form $ [0,n] \times [0,2n] $. For the $ y $-coordinates, this can be achieved by passing to a combinatorially equivalent instance where $ y^b_i,y^t_j \in \{0,\dots,2n\} $ for every $ i,j \in [n] $. For the $ x $-coordinates, we can scale the instance such that the largest rectangle has width exactly $1$. Either the instance splits into independent sub-instances, or it has total width less or equal than $ n $.

Finally, when we describe our $ (1+O(\epsilon)) $-approximation schemes, we can restrict to rectangles $ R_i $ such that  $ \epsilon / n \leq w(R_i) \leq 1 $. That is, the width of each rectangle lies between $\epsilon / n$ and $1$. Indeed, since the optimal solution has cost at least $1$, we can greedily stab all rectangles of width smaller or equal than $\epsilon /n$, and remove them from the instance. This increases the total length of the solution by a factor of at most $(1+\epsilon)$.

\section{A PTAS for rectangles of similar width}\label{section:ptas}

In this section we describe a PTAS for stabbing when all rectangles are of width between $ \delta $ and $1$, which is equivalent to the bounded ratio case as described in the introduction. 
The two main technical ingredients, \cref{lemma:partitioning} and \cref{lemma:narrow}, will be used in the next section to describe a QPTAS for general instances. 

We now describe the main ideas. 
In a first step, we partition our instance $ \R $ into narrow strips using vertical lines. We will show that all rectangles that intersect one of the vertical lines can be stabbed by segments of total cost $O(\epsilon)\cdot \OPT$. We do so using a constant-factor approximation algorithm. This is \cref{lemma:partitioning}.
The resulting instances can be further partitioned into small and independent sub-instances using horizontal segments that we will include in our solution. The total length of these horizontal segments is $O(\epsilon)\cdot \OPT$. Furthermore, in each of these sub-instances, the cost of an optimal solution is $ O(w(\R)/\epsilon^2) $. This is \cref{lemma:narrow}.
When all rectangles have width between $ \delta $ and $1$, we can solve these small sub-instances by enumeration. 
\newline

We first explain how to partition our instance into narrow vertical strips of width at most $1/\epsilon$, see \cref{fig:strips}. 

\begin{lemma}\label{lemma:partitioning}[Partitioning into narrow strips]
    For any $\epsilon>0$, given an instance $\R$ of \stabbing,  we can  in polynomial time find a set $ \S_E $ of line segments such that 
    \begin{itemize}
        \item the total length, i.e., the cost of the line segments in $\S_E$ is $16\epsilon \cdot \OPT$; and
        \item after removing the rectangles stabbed by segments in $ \S_E $, the instance splits into disjoint strips, each of width at most $w(\R)/\epsilon$. 
    \end{itemize}
\end{lemma}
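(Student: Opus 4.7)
I would attack the lemma via a shifted-grid argument that exploits the bound $w(R) \le w(\R)$ to make the rectangles ``cut'' by the grid cheap to cover. After rescaling so that $w(\R) = 1$, and assuming for simplicity that $1/\epsilon$ is an integer, I consider the family of vertical-line grids
\[
V_t := \{t + j/\epsilon : j \in \Z\}, \qquad t \in \{0, 1, \ldots, 1/\epsilon - 1\},
\]
and let $\R_t \subseteq \R$ consist of those rectangles that are cut by some line of $V_t$. Removing $\R_t$ leaves an instance supported in the strips between consecutive lines of $V_t$, each of width $1/\epsilon = w(\R)/\epsilon$, so the second bullet of the lemma is automatic once I have chosen a shift $t^\star$ for which $\R_{t^\star}$ is also cheap to stab.

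The core step is to show by averaging that some shift satisfies $\OPT(\R_{t^\star}) \leq 2\epsilon \cdot \OPT(\R)$. Fix any optimal solution $\OPT^\star$. For a given shift $t$, I build an explicit feasible stabbing of $\R_t$ from $\OPT^\star$ as follows: for each segment $s \in \OPT^\star$ and each cut point $c \in s \cap V_t$, I include the truncation $s \cap [c - 1, c + 1]$. Given any $R \in \R_t$, let $s \in \OPT^\star$ be a segment stabbing $R$ and $c$ the $x$-coordinate of a $V_t$-line cutting $R$; since $w(R) \leq 1$ and $R$ contains $c$, the horizontal extent of $R$ lies in $[c-1, c+1]$, so $s \cap [c-1, c+1]$ still stabs $R$. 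Hence the construction is feasible and $\OPT(\R_t)$ is upper bounded by its total length.

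To bound the sum of these lengths across shifts, I observe that as $t$ ranges over all shifts and $c$ ranges over $V_t \cap s$, the cut point $c$ ranges exactly once over the integers in $s$. Swapping the sum with an integral gives
\[
\sum_{t=0}^{1/\epsilon - 1} \sum_{c \in V_t \cap s} |s \cap [c-1, c+1]| \;=\; \int_s \bigl|\{c \in \Z : |x - c| \le 1\}\bigr|\, dx \;\le\; 2 |s|,
\]
since almost every $x$ has exactly two integers within distance one. Summing over $s \in \OPT^\star$ yields $\sum_t \OPT(\R_t) \leq 2 \cdot \OPT(\R)$, and therefore the best shift $t^\star$ satisfies $\OPT(\R_{t^\star}) \leq 2\epsilon \cdot \OPT(\R)$.

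Finally, I would run the $8$-approximation for stabbing (the one developed later in the paper) on each of the $1/\epsilon$ candidate sets $\R_t$ and output the solution $\S_E$ of smallest total length; its cost is at most $8 \cdot \OPT(\R_{t^\star}) \leq 16 \epsilon \cdot \OPT(\R)$, as required. The main obstacle I foresee is getting the charging multiplicity tight enough, i.e., controlling the coverage of each integer $c$ so that the multiplicity integrates to exactly $2$ and hence the final constant matches the advertised $16$; everything else reduces to a routine enumeration over the $1/\epsilon$ shifts plus a black-box call to the constant-factor approximation.
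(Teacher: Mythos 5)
Your proposal is correct and follows essentially the same route as the paper: a shifted vertical grid of spacing $w(\R)/\epsilon$, a charging argument that covers each cut rectangle by a length-$2$ window around the cut point on an optimal segment, and derandomization by enumerating all shifts and running the $8$-approximation on each candidate set of cut rectangles. The only difference is bookkeeping --- you use integer shifts and an exact double-counting/integral bound giving $\sum_t \OPT(\R_t) \le 2\,\OPT$ deterministically, whereas the paper discretizes the shift to multiples of $\epsilon/n$ and argues in expectation, splitting into short and long segments; your version is, if anything, slightly cleaner on the constants.
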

\begin{figure}[h]
	\centering
	\includegraphics[width=1.0\textwidth]{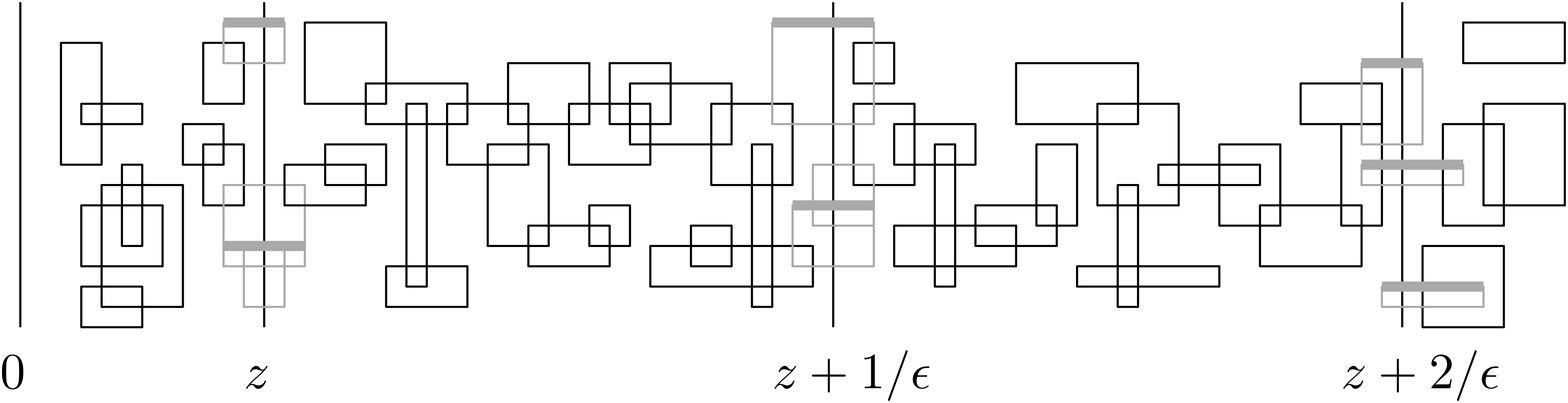}
	\caption{Vertical partitioning into strips. Grey rectangles are covered by segments in $ \S_E $.}
	\label{fig:strips}
\end{figure}

\begin{proof}
	
	We first note that, by scaling the instance, we may assume without loss of generality that $w(\R) = 1$. 
    For some number $z \in [0, 1/\epsilon) $, a multiple of $ \epsilon/n $, to be specified later, we intersect the bounding box of $\R$ with the set of vertical lines $\mathcal{L}_z$ with $x$-coordinates $z + i/\epsilon$ for $i\in \mathbb{Z}$. 
	We partition our instance as follows. We consider the rectangles that are intersected by some line from $ \mathcal{L}_z $, and call this set of rectangles $E_z$. We will show that we can stab all rectangles from $ E_z $ using segments of total length at most $ O(\epsilon)\cdot \OPT $. 
	The remaining instance consists of rectangles that are split into disjoint strips, each of width at most $(1/\epsilon)$. It remains to show that there exists a choice of $z$, for which there is a set of segments $ \mathcal{S}_{E_z} $ stabbing all rectangles in $ E_z $ of total length at most $ O(\epsilon )\cdot\OPT$. This is sufficient: using the $8 $-approximation algorithm for stabbing from \cref{sec:const} and trying out all $ n/\epsilon^2 $ possibilities for $ z $, we can find a set of segments $ \S_E $ of cost at most $ 16 \epsilon \cdot \OPT $. 
	
	
	To this end, denote by $\mathcal{S}_{\OPT}$ the set of segments in the optimal solution for the entire instance.
	Fix $s \in \mathcal{S}_{\OPT}$ and an integer $z$. If $s$ intersects one of the vertical lines from $ \mathcal{L}_z $, then, for each intersection point $(p_1, p_2)$, add the segment $[p_1 - 1, p_1 + 1]\times p_2$ to $\mathcal{S}_{E_z}$. Since $\mathcal{S}_{\OPT}$ covers all rectangles, $\mathcal{S}_{E_z}$ will stab all rectangles in $E_z$.  Moreover, it holds that: 	
	\begin{displaymath}
	\sum_{s \in \mathcal{S}_{E_z}}|s| \leq \sum_{s \in \mathcal{S}_{\OPT}} 2 \cdot |s \cap \mathcal{L}_z|.
	\end{displaymath}
	Whenever a segment $s \in \OPT$ is larger than $1/\epsilon$, the contribution of the corresponding term is at most $2\epsilon |s|$, for any $z \in [0, 1/\epsilon)$. Therefore, we have:
	\begin{displaymath}
	\sum_{s \in \mathcal{S}_{E_z}}|s| \leq \sum_{s \in \mathcal{S}_{\OPT}\,:\, |s| < 1/\epsilon} 2\cdot |s \cap \mathcal{L}_z| + \sum_{s \in \mathcal{S}_{\OPT}\,:\, |s| \geq 1/\epsilon} 2\epsilon\cdot|s|.
	\end{displaymath}

	If we sample a multiple $ z $ of $ \epsilon/n $ from the interval $ [0,1/\epsilon) $ uniformly at random, the probability that a segment $ s \in \mathcal{S}_{\OPT}$ with $|s| < 1/\epsilon  $ is hit by $\L_z$ is at most $2 \epsilon |s| $. Indeed, if $ z $ is sampled uniformly from $ [0,1/\epsilon) $, then the probability is equal to $ \epsilon |s| $, and our discretization increases this by a factor of $ 2 $ at most. By linearity of expectation, it follows that
	
	
	\begin{displaymath}
	\mathbb{E}_{z}\big[\sum_{s \in \mathcal{S}_{E_z}}|s|\big] \leq 2\epsilon\cdot \OPT .
	\end{displaymath}

\end{proof}

We now explain how to further simplify instances that are contained in a narrow strip. 


\begin{lemma}\label{lemma:narrow}[The case of narrow strips]
	Consider a \stabbing instance $\R$ in which all  rectangles are contained in a strip of width $w(\R)/\epsilon$.  We can in polynomial time find a set of line segments $ \S_h $ of cost  $\epsilon \cdot OPT $ such that, after removing those rectangles stabbed by $\S_h$, the remaining rectangles are partitioned into independent sub-instances $ \R_1, \dots, \R_k $ such that $ \OPT(\R_i) \leq O(w(\R)/\epsilon^2) $ for each $i$. 
\end{lemma}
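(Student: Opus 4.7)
The plan is to produce $\S_h$ by grouping the segments of a constant-factor approximate solution by $y$-coordinate. First I would run the $8$-approximation algorithm of \cref{sec:const} on $\R$ to obtain a feasible solution $S$ of cost $|S| \le 8\,\OPT(\R)$. Writing $W := w(\R)/\epsilon$ for the strip width and $T := 8W/\epsilon = 8\,w(\R)/\epsilon^2$ for the target batch cost, I would then sort the segments of $S$ by $y$-coordinate and greedily partition them into consecutive batches $G_1,\dots,G_k$: segments are appended one at a time to the current batch, and a new batch is opened as soon as the cumulative length of the current one exceeds $T$. Since every segment of $S$ is horizontal and contained in the strip, its length is at most $W \le T$, so every batch has total cost at most $2T$; and since each batch other than possibly the last has cost strictly greater than $T$, we obtain $k - 1 \le |S|/T$.

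Between consecutive batches I would place a single \emph{cut}: for each $\ell = 1, \dots, k-1$, pick a $y$-value $y_\ell$ strictly between the largest $y$-coordinate used by $G_\ell$ and the smallest one used by $G_{\ell+1}$, and add to $\S_h$ the horizontal segment of length $W$ that spans the whole strip at height $y_\ell$. The cost bound then follows by a one-line calculation,
\[
|\S_h|\;\le\;(k-1)\cdot W\;\le\;\frac{|S|}{T}\cdot W\;\le\;\frac{8\,\OPT(\R)}{T}\cdot W\;=\;\epsilon\cdot\OPT(\R).
\]
After deleting every rectangle stabbed by a cut in $\S_h$, the remaining instance naturally splits into sub-instances $\R_1,\dots,\R_k$, where $\R_\ell$ collects the rectangles whose $y$-range lies inside the $\ell$-th horizontal slab $(y_{\ell-1},y_\ell)$ (taking $y_0$ and $y_k$ at the bottom and top of the bounding box). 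This decomposition is genuinely independent because no single horizontal segment can simultaneously stab rectangles from two different slabs.

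The heart of the argument---and the step I expect to be the main obstacle---is the inequality $\OPT(\R_\ell) \le |G_\ell|$ for every $\ell$. The plan is to show that $G_\ell$ on its own is a feasible solution for $\R_\ell$: any surviving $R \in \R_\ell$ is stabbed in $S$ by some segment $s$, and the $y$-coordinate of $s$ must lie inside the $y$-range of $R$, which in turn sits inside $(y_{\ell-1},y_\ell)$; hence $s \in G_\ell$. The delicate case is that of a rectangle whose $y$-range happens to contain some cut level $y_\ell$, but such a rectangle is stabbed by the corresponding segment of $\S_h$ (which spans the entire strip and therefore meets both vertical edges of the rectangle) and is removed before the decomposition is formed. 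Combining this with $|G_\ell| \le 2T = O(w(\R)/\epsilon^2)$ yields the promised bound on $\OPT(\R_\ell)$ and completes the plan.
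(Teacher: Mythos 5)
Your proof is correct and reaches the same kind of decomposition as the paper (horizontal full-width cuts splitting the strip into slabs), but it gets there by a genuinely different route. The paper sweeps a horizontal line upward and repeatedly calls the constant-factor approximation as a \emph{cost oracle} on the rectangles below the sweep line, placing a cut at the first height where that estimate exceeds $c\,w(\R)/\epsilon^2$; the accounting then goes through a \emph{lower} bound $\OPT(\R_i)\ge w(\R)/\epsilon^2$ on each slab, against which each cut of length $w(\R)/\epsilon=\epsilon\cdot w(\R)/\epsilon^2$ is charged. You instead run the approximation \emph{once}, greedily partition its segments by height into batches of length $\Theta(w(\R)/\epsilon^2)$, and place cuts between batches. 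This buys two things: the upper bound $\OPT(\R_\ell)\le|G_\ell|\le 2T$ is certified directly by exhibiting $G_\ell$ as a feasible solution for $\R_\ell$ (whereas the paper needs the auxiliary observation that the oracle's value increases by at most $w(\R)/\epsilon$ per unit step of the sweep), and the cut cost is bounded in one line against $|S|\le 8\,\OPT$. The paper's version, in exchange, does not need the segments of the approximate solution at all, only its value.

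One small point to patch: you ask for $y_\ell$ \emph{strictly} between the largest height in $G_\ell$ and the smallest in $G_{\ell+1}$, but these can coincide when several segments of $S$ share a $y$-coordinate and the greedy batching splits them across the boundary. The fix is to allow the cut to sit at that common height $y^\ast$: every rectangle whose $y$-range contains $y^\ast$ is then stabbed by the cut and removed, a segment of $S$ at height exactly $y^\ast$ stabs only such (removed) rectangles, and every surviving rectangle of slab $\ell$ is still stabbed by a segment whose height lies strictly inside the slab and hence, by the sortedness of the batches, belongs to $G_\ell$. With that wording change the argument is complete.
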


\begin{wrapfigure}[27]{L}{0.2\textwidth}
	\vspace{-10.8pt}
	\caption{Strip decomposition using horizontal lines. \phantom{éalsdfjk}}\label{wrap-fig:1}
	\includegraphics[width=0.2\textwidth]{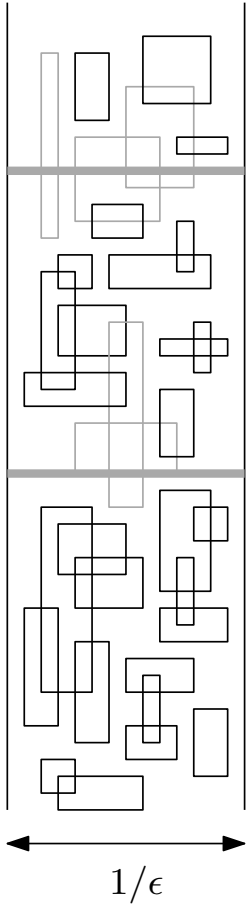}
\end{wrapfigure}

\bigskip
\noindent \textit{Proof.} By translating and preprocessing (see the preliminaries) we may assume that all rectangles are contained inside $[0, w(\R)/\epsilon]\times [0, 2n]$ and that their $y$-coordinates take integer values. We start by sweeping from bottom to top a horizontal line. We pick the smallest value of $z$, say $z_1$, so that a $c$-approximation algorithm for stabbing, for example the $ 8 $-approximation algorithm from \cref{sec:const}, requires segments of total length strictly greater than $ c w(\R)/\epsilon^2 $ to stab all rectangles entirely contained inside the region $[0, w(\R)/\epsilon]\times [0,z_1]$. Observe that this value can only increase by $w(\R)/\epsilon $ when going from $ z $ to $ z+1 $.

It follows that the cost of an optimal solution to cover rectangles in $[0, w(\R)/\epsilon]\times [0,z_1]$ is between $w(\R)/\epsilon^2$ and $ c \cdot w(\R)/\epsilon^2 + w(\R)/\epsilon$. We add the segment $[0, w(\R)/\epsilon] \times z_1$ to the set $ \S_h $ and remove stabbed rectangles from our instance.  We repeat this procedure until $[0, w(\R)/\epsilon]\times [0, 2n]$ is subdivided into independent instances $ \R_1, \dots, \R_k $ with $w(\R)/\epsilon^2 \leq \OPT(\R_i)\leq c \cdot w(\R)/\epsilon^2 + w(\R)/\epsilon $, where $\R_i$ consists of those rectangles contained in $[0, 1/\epsilon]\times [z_{i-1}, z_i]$ and $ z_0=0, z_{k}=2n $. This is illustrated if \cref{wrap-fig:1}.
	
Since the $ \R_i $ are independent sub-instances of $\R$, we have
	\begin{displaymath}
	\sum_{i=1}^{k} \OPT(\R_i) \leq \OPT(\R).
	\end{displaymath}
	On the other hand, since $\OPT(\R_i) \geq  w(\R)/\epsilon^2$, we have that
	\begin{displaymath}
	\sum_{s \in \S_h}|s| \leq \sum_{i=1}^{k-1}\epsilon\cdot \OPT(\R_i) \leq \epsilon \cdot \OPT(\R).
	\end{displaymath}
	This shows that adding the segments in $ \S_h $ to our solution incurs a cost of $\epsilon \cdot \OPT$. \qed


\bigskip
To devise our algorithms, it will be convenient to state the following corollary, which follows by first applying \cref{lemma:partitioning} and then applying \cref{lemma:narrow} on each of the strips.
\begin{corollary}
    For every $\epsilon >0$,  given a \stabbing instance $\R$, we can in polynomial time find a set $\S$ of line segments of total cost $17\epsilon \cdot \OPT$ such that, after removing those rectangles stabbed by $\S$, the remaining rectangles are partitioned into independent sub-instances $\R_1, \ldots, \R_m$ such that $\OPT(\R_i) \leq O( w(\R)/\epsilon^2)$ for each $i$.
    \label{corollary:splitting_up}
\end{corollary}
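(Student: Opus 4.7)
The plan is a direct composition of the two lemmas just established. First I would apply \cref{lemma:partitioning} to $\R$, obtaining in polynomial time a set $\S_E$ of horizontal segments of total length at most $16\epsilon \cdot \OPT(\R)$ such that, after removing the rectangles stabbed by $\S_E$, what remains splits into disjoint sub-instances $\R'_1, \ldots, \R'_\ell$, each contained in a vertical strip of width at most $w(\R)/\epsilon$. Then, treating each $\R'_j$ as a standalone stabbing instance, I would invoke \cref{lemma:narrow} separately on each of them to obtain a set $\S_h^{(j)}$ of horizontal segments of total cost $\epsilon \cdot \OPT(\R'_j)$ which, once its stabbed rectangles are removed, partitions $\R'_j$ into independent sub-instances $\R_{j,1}, \ldots, \R_{j,k_j}$ with $\OPT(\R_{j,i}) \leq O(w(\R'_j)/\epsilon^2)$.

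The output is $\S := \S_E \cup \bigcup_{j=1}^{\ell} \S_h^{(j)}$, together with the list of all $\R_{j,i}$ reindexed as $\R_1, \ldots, \R_m$. Two bookkeeping observations close the argument. First, because $\R'_1, \ldots, \R'_\ell$ are disjoint sub-instances of $\R$, any solution to $\R$ can be split among them, hence $\sum_{j=1}^{\ell} \OPT(\R'_j) \leq \OPT(\R)$. Combining this with the cost bounds of the two lemmas yields
\[
|\S| \;\leq\; 16\epsilon \cdot \OPT(\R) + \sum_{j=1}^{\ell} \epsilon \cdot \OPT(\R'_j) \;\leq\; 17\epsilon \cdot \OPT(\R).
\]
Second, since every $\R'_j$ is a subset of $\R$, we have $w(\R'_j) \leq w(\R)$, so the per-piece bound from \cref{lemma:narrow} transfers uniformly to $\OPT(\R_i) \leq O(w(\R)/\epsilon^2)$, as required.

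There is essentially no technical obstacle in this corollary: the whole point is that the strips produced by \cref{lemma:partitioning} are \emph{independent}, which is what turns a per-strip cost of $\epsilon \cdot \OPT(\R'_j)$ into a global cost of $\epsilon \cdot \OPT(\R)$ rather than a blow-up proportional to the number of strips. Polynomial runtime is inherited from the two lemmas, as each is invoked only a polynomial number of times.
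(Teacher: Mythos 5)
Your proposal is correct and matches the paper's argument, which likewise obtains the corollary by first applying \cref{lemma:partitioning} and then applying \cref{lemma:narrow} to each resulting strip. The cost accounting ($16\epsilon\cdot\OPT$ plus $\epsilon\cdot\sum_j\OPT(\R'_j)\leq\epsilon\cdot\OPT$ by independence of the strips) and the observation that $w(\R'_j)\leq w(\R)$ are exactly the intended details.
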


It is now immediate to obtain a polynomial time approximation scheme for instances where all rectangles are of width between $\delta$ and $1$. Using the set $\S$ of line segments of total length $O(\epsilon)\cdot \OPT$ given by \cref{corollary:splitting_up}, we partition our original instance $ \R $ into at most $ n $ non-empty sub-instances $ \R_1, \ldots, \R_m $ where $ \OPT(\R_i) \leq O(w(\R)/\epsilon^2) = O(1/\epsilon^2)$ for each $ i $. 
Since all rectangles are of width greater than $ \delta $, the optimal solution to each sub-instance can only use at most $O(1/(\delta \cdot \epsilon^2))$ segments. As there are at most $ O(n^3) $  candidate segments (see preliminaries), this gives at most $ n^{O(1/(\delta \cdot \epsilon^2))} $ possible feasible solutions for each $\R_i$. We enumerate all possibilities and pick a solution of minimal total length. Since $ |\S| = O(\epsilon)\cdot \OPT(\R)  $ and  $ \sum_{i=1}^{m} \OPT(\R_i) \leq \OPT(\R) $, the union of $ \S$ and the respective segments from the optimal solution for each $ \R_i $ yield an $ (1+O(\epsilon)) $-approximation. 

\begin{lemma}
There is a polynomial time approximation scheme (PTAS) for instances of \stabbing where all rectangles have width between $\delta$ and $1$, for $\delta$ a constant.
\end{lemma}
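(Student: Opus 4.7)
The plan is to invoke \cref{corollary:splitting_up} to reduce to a collection of independent, ``small'' sub-instances, and then solve each sub-instance optimally by brute-force enumeration.

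First, I would invoke \cref{corollary:splitting_up} on $\R$. By the preliminaries, we may assume $w(\R) = 1$ (all rectangle widths lie in $[\delta, 1]$). This produces a set $\S$ of line segments of total length at most $17\epsilon\cdot \OPT(\R)$ such that, after removing from $\R$ all rectangles stabbed by $\S$, the remaining rectangles split into independent sub-instances $\R_1,\ldots,\R_m$ with
\[
\OPT(\R_i) \;\leq\; O\!\left(\tfrac{w(\R)}{\epsilon^2}\right) \;=\; O(1/\epsilon^2)
\]
for every $i$.

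Next, I would solve each sub-instance $\R_i$ optimally by enumeration. The key observation is that every segment appearing in an optimal solution must stab at least one rectangle (otherwise it can be discarded), and therefore must have length at least $\delta$. Combined with $\OPT(\R_i) = O(1/\epsilon^2)$, this means any optimal solution to $\R_i$ uses at most $O(1/(\delta\cdot\epsilon^2))$ segments. Since there are only $O(n^3)$ candidate segments overall (see the preliminaries), enumerating all subsets of candidate segments of size $O(1/(\delta\cdot\epsilon^2))$, checking feasibility for $\R_i$, and keeping the cheapest, takes time $n^{O(1/(\delta\cdot\epsilon^2))}$. For constant $\delta$ and $\epsilon$, this is polynomial.

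Finally, output $\S$ together with the enumerated optimal solutions for each $\R_i$. Feasibility is clear: every rectangle is either stabbed by $\S$ or lies in some $\R_i$ and is stabbed by the chosen solution for that sub-instance. For the cost bound, using $|\S| \leq 17\epsilon\cdot \OPT(\R)$ and $\sum_{i=1}^m \OPT(\R_i) \leq \OPT(\R)$ (the sub-instances are disjoint sub-problems of $\R$), the total cost is at most $(1 + 17\epsilon)\cdot\OPT(\R)$, giving a $(1+O(\epsilon))$-approximation. The only delicate point is the bound on the number of segments used by an optimum of $\R_i$; this is precisely where the lower bound $\delta$ on rectangle widths is essential, since without it a sub-instance of total optimum length $O(1/\epsilon^2)$ could still require an unbounded number of arbitrarily short segments and brute-force enumeration would no longer run in polynomial time.
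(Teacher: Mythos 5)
Your proposal is correct and follows the paper's proof essentially verbatim: apply \cref{corollary:splitting_up} to split into independent sub-instances with $\OPT(\R_i)=O(1/\epsilon^2)$, observe that the width lower bound $\delta$ caps the number of segments in an optimum at $O(1/(\delta\epsilon^2))$, and enumerate over the $O(n^3)$ candidate segments. The cost accounting via $|\S|\leq 17\epsilon\cdot\OPT(\R)$ and $\sum_i\OPT(\R_i)\leq\OPT(\R)$ also matches the paper.
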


\section{A QPTAS for \stabbing}\label{sec:qptas}

In this section, we give a quasi-polynomial time approximation algorithm for stabbing.

\begin{theorem}
    For every $\epsilon >0$, there is a  $(1+\epsilon)$-approximation algorithm for \stabbing that runs in time $n^{O(\log^3(n)/\epsilon^2)}$.
    \label{thm:qptas}
\end{theorem}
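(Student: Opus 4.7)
The plan is to recurse on the maximum width of rectangles in the instance, applying \cref{corollary:splitting_up} at each level and ``peeling off'' the widest rectangles in each sub-instance by direct enumeration of the long segments that must stab them. By the preliminaries we may assume $w(\R) = 1$ and all widths lie in $[\epsilon/n, 1]$. Set $d := \lceil \log_2(n/\epsilon)\rceil + 1$ and $\epsilon' := \epsilon/(17d)$. The recursion has $d$ levels, indexed by $k = 0, 1, \ldots, d$: at level $k$ we work on a sub-instance whose maximum rectangle width is at most $2^{-k}$, so that at level $d$ the instance is empty.

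At level $k$ on input $\R'$, first apply \cref{corollary:splitting_up} with parameter $\epsilon'$ to obtain a set $\S$ of total length at most $17\epsilon' \cdot \OPT(\R')$ and independent sub-instances $\R'_1, \ldots, \R'_m$ with $\OPT(\R'_i) \leq O(2^{-k}/\epsilon'^2)$. In each $\R'_i$, any segment stabbing a rectangle of width $\geq 2^{-k-1}$ has length at least $2^{-k-1}$, so an optimal solution for $\R'_i$ contains at most $O(2^{-k}/\epsilon'^2) / (2^{-k-1}) = O(1/\epsilon'^2)$ such ``long'' segments. We enumerate all subsets of size at most $O(1/\epsilon'^2)$ of the $O(n^3)$ candidate segments, giving $n^{O(1/\epsilon'^2)}$ guesses per $\R'_i$. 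For each guess we remove the stabbed rectangles; if any rectangle of width $\geq 2^{-k-1}$ remains we discard the guess, otherwise we recurse on the residual at level $k+1$ (whose maximum width is strictly below $2^{-k-1}$). We output $\S$ together with, for each $\R'_i$, the cheapest combination of guessed long segments plus recursive solution.

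For the approximation guarantee, let $f(k)$ denote the ratio achieved on a level-$k$ input. The correct guess for $\R'_i$ picks exactly the long segments of some fixed $\OPT$-solution of $\R'_i$, so after removing stabbed rectangles the residual is covered by the remaining segments of that $\OPT$-solution and hence has optimum at most $\OPT(\R'_i)$ minus the cost of the long segments. Combined with $\sum_i \OPT(\R'_i) \leq \OPT(\R')$, a one-line induction gives $f(k) \leq 17\epsilon'(d-k) + 1$ and hence $f(0) \leq 1 + 17\epsilon' d = 1 + \epsilon$. For the running time, if $T_k$ is the time on a level-$k$ input then $T_k \leq n \cdot n^{O(1/\epsilon'^2)} \cdot T_{k+1} + \poly(n)$, so $T_0 \leq n^{O(d/\epsilon'^2)}$; plugging in $d = O(\log n)$ and $\epsilon' = \Theta(\epsilon/\log n)$ yields $n^{O(\log^3(n)/\epsilon^2)}$, as claimed.

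The main technical hurdle is the interaction between the enumeration and the recursion: one must show simultaneously that (i) $O(1/\epsilon'^2)$ long segments suffice to stab every rectangle of width $\geq 2^{-k-1}$ in an optimal solution of $\R'_i$, so the residual sub-instance has strictly smaller maximum width, and (ii) this charging fits with the additive $17\epsilon'\cdot\OPT(\R')$ loss at each level without compounding multiplicatively. Both hinge on the \emph{local} $\OPT$-bound $O(w(\R)/\epsilon^2)$ supplied by \cref{corollary:splitting_up}, which is precisely what allows the bounded-ratio PTAS to be unfolded into a depth-$O(\log n)$ recursion and turned into a QPTAS.
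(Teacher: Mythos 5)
Your proposal is correct and follows essentially the same route as the paper's proof: recursively apply \cref{corollary:splitting_up}, use the local bound $\OPT(\R_i) = O(w(\R)/\mu^2)$ to enumerate the $O(1/\mu^2)$ long segments of an optimal solution, recurse on the residual whose maximum width has halved, and set the per-level error parameter to $\Theta(\epsilon/\log n)$ over the $O(\log(n/\epsilon))$ recursion levels. Both the approximation induction and the running-time recurrence match the paper's analysis.
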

The main idea is to recursively apply \cref{corollary:splitting_up} and \emph{guess} some long segments of the optimal solution. This divides the current instance into sub-instances with no long rectangles, that we can solve recursively independently from each other. Specifically, given an instance of \stabbing $\R$ and a parameter $\mu>0$, the algorithm proceeds as follows:

\begin{itemize}
    \item  Apply \cref{corollary:splitting_up} (with the $\epsilon$ of the corollary set to $\mu$) to find a set $\S$ of line segments of cost at most $17\mu \cdot \OPT(\R)$ such that the remaining unstabbed rectangles are partitioned into independent sub-instances $\R_1, \ldots, \R_m$ with $\OPT(\R_i) = O(w(\R)/ \mu^2)$. 
    \item Each $\R_i$ is then independently solved as follows. Guess all line segments $\S_{guess}$ in an optimal solution of $\R_i$ that have length at least $ w(\R)/2$ and let  $\R_i^{res}$ be the sub-instance of $\R_i$ without the rectangles stabbed by $\S_{guess}$. If $\R_i^{res}$ is non-empty, solve it recursively.
\end{itemize}

Some comments regarding the second step are in order. First, the algorithm does not have access to an optimal solution to the instance $\R_i$ and so it cannot guess the line segments $\S_{guess}$. Instead, this is solved by enumeration. Indeed, since $\OPT(\R_i) \leq O(w(\R)/\mu^2)$, an optimal solution to $\R_i$ may only place $O(1/\mu^2)$ line segments of length at least $ w(\R)/2$. As we did in the polynomial time approximation scheme, we can enumerate over all possibilities by considering $n^{O(1/\mu^2)}$ many options and then take the best found solution. Second, for a correct guess $\S_{guess}$, we must have that all rectangles in $\R_i$ of width at least $ w(\R)/2$ have been stabbed by $\S_{guess}$. We can therefore restrict ourselves to only recurse on instances with  $w(\R_i^{res}) \leq w(\R)/2$.  In other words, in each recursive call we have at least halved the width of the widest rectangle. As we may assume that the ratio between any two rectangles is $n/\epsilon$ (see preliminaries), we have that the recursion depth $H$ is upper bounded by $O(\log(n/\epsilon)) = O(\log n)$. The following two lemmas therefore imply  \cref{thm:qptas} by selecting $\mu = O(\epsilon/\log(n))$.

\begin{lemma}
    The running time is upper bounded by $n^{ O(H/\mu^2)}$.
\end{lemma}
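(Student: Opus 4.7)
The plan is to unfold the recurrence implicit in the two-step description of the algorithm and count nodes in the recursion tree. Let $T(\R)$ denote the running time of the algorithm on instance $\R$ and let $h(\R)$ denote the depth of the recursion rooted at $\R$.

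First, I would bound the work and the branching factor at a single node. The splitting step invokes \cref{corollary:splitting_up} in polynomial time and produces $m \leq n$ independent sub-instances $\R_1, \dots, \R_m$. For each sub-instance $\R_i$, the enumeration of $\S_{guess}$ ranges over at most $n^{O(1/\mu^2)}$ possibilities: by the corollary we have $\OPT(\R_i) = O(w(\R)/\mu^2)$, and since each guessed segment has length at least $w(\R)/2$, any optimal solution contains at most $O(1/\mu^2)$ long segments; combined with the $O(n^3)$ candidate segments from the preliminaries this gives $n^{O(1/\mu^2)}$ options. Hence the recurrence
\[
T(\R) \;\leq\; \poly(n) \;+\; \sum_{i=1}^{m} n^{O(1/\mu^2)} \cdot T(\R_i^{res}),
\]
where $h(\R_i^{res}) \leq h(\R) - 1$ because, as argued in the text, each recursive call is made on an instance whose widest rectangle has width at most half of $w(\R)$.

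Second, I would solve the recurrence by induction on the depth $h(\R)$. The base case corresponds to an empty residual $\R_i^{res}$, contributing only $\poly(n)$ work. For the inductive step, bounding the sum $\sum_{i=1}^{m}$ by $m \leq n$ times the worst sub-instance gives a multiplicative blow-up of at most $n \cdot n^{O(1/\mu^2)} = n^{O(1/\mu^2)}$ per level (the factor of $n$ coming from $m$ is absorbed into the exponent $O(1/\mu^2)$ since $\mu \leq 1$). Iterating this at most $H$ times yields
\[
T(\R) \;\leq\; \poly(n) \cdot n^{O(H/\mu^2)} \;=\; n^{O(H/\mu^2)},
\]
as claimed.

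The proof presents no real obstacle; the one point to note is that the disjointness of the sub-instances $\R_i$ is not exploited in this analysis. We simply pay the full multiplicative $n \cdot n^{O(1/\mu^2)}$ factor per level of recursion and rely on the logarithmic depth $H = O(\log n)$ (combined with the eventual choice $\mu = O(\epsilon/\log n)$) to keep the total exponent in check. A sharper bound could conceivably be obtained by exploiting disjointness, but this is unnecessary for the stated claim.
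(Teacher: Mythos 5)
Your proof is correct and follows essentially the same route as the paper: both arguments view the execution as a recursion tree of height $H$ whose degree is bounded by $m \cdot n^{O(1/\mu^2)} = n^{O(1/\mu^2)}$ (using $m \leq n$), giving $n^{O(H/\mu^2)}$ nodes with polynomial work each. Your closing remark that disjointness of the sub-instances is not needed matches the paper, which likewise pays the full branching factor at every level.
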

\begin{proof}
This follows since the recursive execution can be illustrated by a tree of height $H$ where each node corresponds to one call to the algorithm. In each such call, the algorithm first uses \cref{corollary:splitting_up} to obtain instances $\R_1, \ldots, \R_m$. Then, for each $\R_i$ independently, it performs $n^{O(1/\mu^2)}$ guesses of $\S_{guess}$ followed by a recursive call. Hence, the degree of the tree is  at most $m \cdot n^{O(1/\mu^2)} = n^{O(1/\mu^2)}$   using $m\leq n$. This implies that the number of nodes of the tree is $n^{ O(H/\mu^2)}$, which in turn also upper bounds the running time of the algorithm.
\end{proof}

\begin{lemma}
    The algorithm outputs a $(1+17(H+1) \mu)$-approximate solution.
\end{lemma}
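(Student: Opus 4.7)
The plan is to prove the bound by induction on the depth $d$ of the recursion subtree rooted at a given invocation of the algorithm: if that invocation is on an instance $\R$ and its recursive subtree has depth at most $d$, then the returned solution costs at most $(1+17(d+1)\mu)\cdot \OPT(\R)$. Applied at the top-level call, where $d\leq H$, this yields the lemma.

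For the inductive step, fix an invocation on $\R$. By \cref{corollary:splitting_up} we obtain a set $\S$ with $|\S|\leq 17\mu\cdot \OPT(\R)$ and disjoint sub-instances $\R_1,\dots,\R_m$ with $\sum_i \OPT(\R_i)\leq \OPT(\R)$. For each $\R_i$, fix an optimal solution $\S^*_i$ and let $L_i\subseteq \S^*_i$ consist of its segments of length at least $w(\R)/2$. Because the algorithm enumerates \emph{all} candidate guesses $\S_{guess}$ and keeps the cheapest outcome, its cost on $\R_i$ is at most $|L_i|+\text{ALG}(\R_i^{res})$, where $\R_i^{res}$ is $\R_i$ with the rectangles stabbed by $L_i$ removed. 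Since $\S^*_i\setminus L_i$ is a feasible stabbing of $\R_i^{res}$, we get $\OPT(\R_i^{res})\leq \OPT(\R_i)-|L_i|$. Applying the induction hypothesis to $\R_i^{res}$ (whose subtree has depth at most $d-1$):
\[
\text{ALG}(\R_i)\leq |L_i|+(1+17 d\mu)\bigl(\OPT(\R_i)-|L_i|\bigr)\leq (1+17 d\mu)\cdot \OPT(\R_i),
\]
where the last step drops the non-positive term $-17d\mu\cdot|L_i|$. Summing over $i$ and adding $|\S|$:
\[
\text{ALG}(\R)\leq 17\mu\cdot \OPT(\R)+(1+17 d\mu)\sum_i \OPT(\R_i)\leq \bigl(1+17(d+1)\mu\bigr)\cdot \OPT(\R).
\]
The base case $d=0$ corresponds to every $\R_i^{res}$ being empty; then $\text{ALG}(\R_i)\leq |L_i|\leq \OPT(\R_i)$ and the same summation delivers $(1+17\mu)\cdot \OPT(\R)$, matching the formula at $d=0$.

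There is no real obstacle beyond verifying two conceptual points. First, since the algorithm exhaustively enumerates all candidate long-segment guesses and keeps the cheapest, its performance on $\R_i$ is upper bounded by what it would achieve with the \emph{correct} guess $L_i$ drawn from any fixed optimal solution; we never need to analyze incorrect guesses. Second, removing those long segments from a fixed optimum of $\R_i$ yields a feasible stabbing of the residual instance, which is precisely what justifies the inequality $\OPT(\R_i^{res})\leq \OPT(\R_i)-|L_i|$ and closes the induction. Everything else is arithmetic: each of the $H+1$ recursion levels contributes an additive $17\mu\cdot\OPT$ of slack through \cref{corollary:splitting_up}, producing the claimed factor $1+17(H+1)\mu$.
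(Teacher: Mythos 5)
Your proof is correct and follows essentially the same route as the paper: induction on the recursion depth, with each level contributing an additive $17\mu\cdot\OPT$ from \cref{corollary:splitting_up} and the correct guess $L_i$ taken from a fixed optimum of $\R_i$. The only difference is that you make explicit the accounting step $|L_i|+(1+17d\mu)(\OPT(\R_i)-|L_i|)\leq(1+17d\mu)\OPT(\R_i)$, which the paper leaves implicit in the phrase ``for the correct guess $\S_{guess}$''.
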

\begin{proof}
    We prove the statement by induction on $H$.  In the base case $H=0$, no recursive calls are made. In that case the algorithm first finds the set $\S$ of line segments of cost at most $17\mu \cdot \, \OPT(\R)$. It then optimally solves the sub-instances $\R_1, \ldots, \R_m$ to get a solution of total cost $17\mu \cdot\OPT(\R) + \sum_{i=1}^m\OPT(\R_i) \leq (1+17\mu) \cdot\OPT(\R) $, where we used that $\sum_{i=1}^m \OPT(\R_i) \leq \OPT(\R)$.
    
    The inductive case follows similarly: the algorithm first finds the set $\S$ of line segments of cost at most $17\mu \cdot\OPT(\R)$. It then recursively solves the sub-instances $\R_1, \ldots, \R_m$, where, by the induction hypothesis, the solution to $\R_i$ has cost at most $(1+17H \mu) \cdot\OPT(\R_i)$ (for the correct guess $\S_{guess}$). It follows that total  cost of the solution is at most $17\mu \cdot \OPT(\R) + \sum_{i=1}^m (1+17 H\mu) \cdot\OPT(\R_i) \leq (1+ 17(H+1)\mu) \cdot\OPT(\R) $.
\end{proof}

\section{A simple $8$-approximation for \stabbing}\label{sec:const}

In this section we describe a very simple constant-factor approximation algorithm for \stabbing. Our algorithm is based on a shifting technique and dynamic programming. In particular, it avoids the quasi-uniform sampling approach as outlined in \cite{Stabbing_Chan}. Other than being very involved, quasi-uniform sampling yields approximation guarantees which are generally quite large. In contrast to this, our algorithm exploits the underlying geometry of the problem in a more direct way and achieves an approximation factor of $8$. 

We now explain our approach. In a first step, we will show that we can solve \emph{laminar} instances of stabbing \emph{exactly}. This can be done in $O(n^5)$ time using dynamic programming. In a second step, by shifting and scaling the rectangles, we transform any instance of stabbing into a laminar one. The solution to the laminar instance can then be scaled to yield a solution to the original instance. This only increases the total length of the segments by a factor of $8$.

We now give the definition of a laminar instance. 
\begin{definition}
	An instance of stabbing is laminar if its projection onto the $ x $-axis is a laminar family of intervals, that is, any two of these intervals are either disjoint or one is contained in the other. 
\end{definition}

Our approach exploits the laminarity in three simple ways:
\begin{enumerate}
	\item Any subset of the rectangles is still laminar. \label{laminar:one}
	\item The rectangle of \emph{largest} width need only be covered by a segment of length equal to its width.\label{laminar:two}
	\item Stabbing the rectangle of largest width results in $4$ \emph{independent} instances.\label{laminar:three}
\end{enumerate}

Property \cref{laminar:one} is immediate from the definition of laminarity. For \cref{laminar:two}, since the projection onto the $x$-axis is laminar, it is easy to see that the segment stabbing the rectangle $W$ of largest width can be assumed to start and end at $x_W^l$ and $x_W^r$ respectively. Finally, we illustrate \cref{laminar:three}, which is at the heart of our dynamic program, in the following figure.

\begin{figure}[h]
	\centering
	\includegraphics[width=1\textwidth]{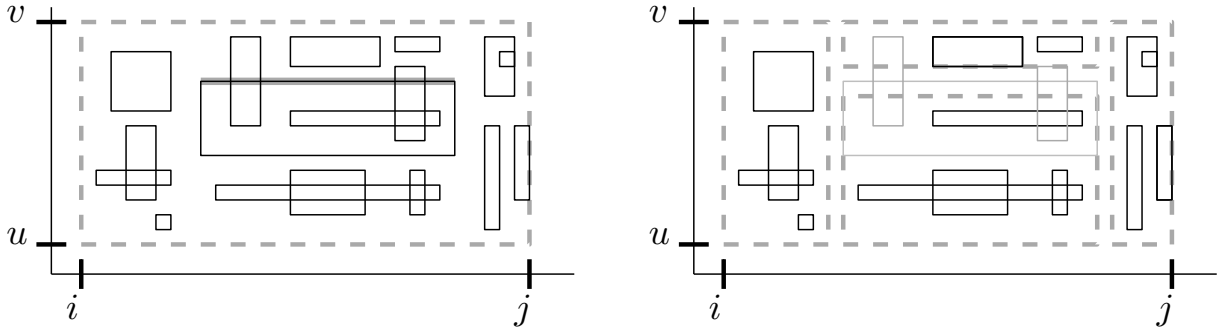}
	\caption{On the left panel we cover the largest rectangle by a segment. The resulting instance splits into four independent sub-instances, as shown on the right.  }
	\label{fig:laminar-before-after}
\end{figure} 

Specifically, suppose we are given a (sub-)instance of \stabbing consisting of all rectangles that are contained inside the bounding box $B:=[i,j]\times[u,v]$. Furthermore, suppose that we decide to stab the largest rectangle $W$ contained in $B$ by the segment $s := [x_W^l, x_W^r]\times y$. All rectangles in $B$ that are not covered by $s$ must be contained in either $B_1:= [i, x_W^l]\times [u,v], B_2 := [x_W^r, j]\times [u,v], B_3 := [x_W^l, x_W^r]\times[u,y-1]$ or $ B_4 := [x_W^l, x_W^r]\times[y+1, v] $. On the other hand, all rectangles that are stabbed by $s$ are not contained in either $B_1, B_2, B_3$ or $B_4$. This is illustrated on the right-hand side of \cref{fig:laminar-before-after} and proves \cref{laminar:three}. \\
Using the same notation as above, it is now very simple to state our dynamic program. For $i,j \in \{0, 1, \ldots, n\}$ and $u, v\in \{0, 1, \ldots, 2n\}$, our state space consists of (sub-)instances $I(i,j,u,v)$ where we only consider rectangles contained inside $[i,j]\times[u,v]$. To compute the optimal value (and the respective segments) for stabbing rectangles in $I(i,j,u,v)$ we proceed as follows. We find the rectangle $W$ of largest width inside $I(i,j,u,v)$. For all segments that stab $W$, i.e. segments contained in the set 
\begin{displaymath}
	\mathcal{Y}_W:= \{ [x_W^l, x_W^r]\times y \,|\,  y_{W}^b \leq y \leq y_{W}^t \text{ and } y \in \{0, \ldots, 2n\} \},
\end{displaymath}
we look up the cost of the optimal solutions of the four respective sub-instances that are created with such a segment, and pick the one such that the overall cost is minimized. More specifically,

\begin{equation}\label{eq:dp}\tag{DP}
\begin{split}
\OPT(B)= |x_W^r-x_W^l|+\OPT(B_1)+\OPT(B_2)+ \min_{y \in \mathcal{Y}_W}\big(\OPT(B_3)+\OPT(B_4)\big).
\end{split}
\end{equation}

\noindent
Observe that each sub-instance called in the above recursion has strictly fewer rectangles than the original instance $I(i,j,u,v)$. Since we can solve (sub-)instances containing at most $1$ rectangle exactly, correctness follows from \cref{eq:dp}. Finally, as the number of states is of size $O(n^4)$, we can conclude.

\begin{lemma}
In $O(n^5)$ time we can compute an optimal solution to a laminar instance of \stabbing.
\end{lemma}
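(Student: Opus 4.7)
The plan is to turn the three laminar properties already enumerated into an algorithm that, for every axis-aligned box $B = [i,j] \times [u,v]$ with $i,j \in \{0,\dots,n\}$ and $u,v \in \{0,\dots,2n\}$, computes the optimal cost $\OPT(B)$ of stabbing the sub-instance $I(i,j,u,v)$ consisting of all rectangles contained in $B$. There are $O(n^4)$ such boxes, so if I can fill in the table in $O(n)$ time per entry I am done. The candidate set of segments has been reduced to $O(n^3)$ in the preliminaries, and for each box the DP will need to inspect only $O(n)$ of them, namely the vertical positions of a stabber for the widest rectangle of $I(i,j,u,v)$.

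I would compute the entries in order of increasing number of contained rectangles. If $I(i,j,u,v)$ is empty, set $\OPT(B) = 0$. Otherwise, identify the unique rectangle $W$ of maximum width inside $B$ (breaking ties arbitrarily). By property \cref{laminar:two}, an optimal solution contains a segment that stabs $W$ and has $x$-extent exactly $[x_W^l, x_W^r]$; it suffices to try all $O(n)$ integer heights $y \in \mathcal Y_W$ from the preliminaries. For each such guess, property \cref{laminar:three} tells me that the remaining rectangles split into the four sub-instances $I(B_1), I(B_2), I(B_3(y)), I(B_4(y))$, which by property \cref{laminar:one} are themselves laminar and are strictly smaller than $I(B)$; their optima are already tabulated. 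Taking the minimum over $y$ of the sum in \cref{eq:dp} yields $\OPT(B)$, and the certifying segments are recovered in the usual way by storing a minimiser at each state.

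Correctness is by induction on $|I(i,j,u,v)|$. The base case is trivial. For the inductive step, any feasible solution for $I(B)$ contains some segment $s^*$ stabbing $W$; by property \cref{laminar:two} I may assume $s^*$ has $x$-extent $[x_W^l,x_W^r]$ and integer height $y^*\in\mathcal Y_W$, so $s^*$ is one of the options enumerated by the DP. Removing $s^*$ leaves a feasible solution for the disjoint union $I(B_1)\sqcup I(B_2)\sqcup I(B_3(y^*))\sqcup I(B_4(y^*))$, and by \cref{laminar:three} the converse also holds, so the recursion in \cref{eq:dp} is exact. The main thing to be careful about is precisely this independence claim: a rectangle not stabbed by $s^*$ must lie entirely inside one of the four sub-boxes, which uses that $B$ is axis-aligned and that the projection of the instance onto the $x$-axis is laminar, so every rectangle $R \ne W$ in $B$ either has $x$-projection disjoint from $[x_W^l, x_W^r]$ (giving $B_1$ or $B_2$) or contained in $[x_W^l, x_W^r]$ (giving $B_3(y^*)$ or $B_4(y^*)$ depending on which side of $s^*$ it lies).

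For the running time, the $O(n^4)$ boxes can be generated directly, the widest rectangle of each can be precomputed in $O(n)$ by scanning the input, and the minimisation over $\mathcal Y_W$ uses $O(n)$ table look-ups, giving $O(n^5)$ overall. Returning the optimum for the bounding box $[0,n]\times[0,2n]$ completes the proof.
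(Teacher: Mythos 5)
Your proposal is correct and follows essentially the same route as the paper: the same $O(n^4)$ state space of boxes $I(i,j,u,v)$, the same use of the three laminarity properties to justify stabbing the widest rectangle with a segment of exactly its width and splitting into four independent sub-instances, and the same $O(n)$ work per state. The only difference is that you spell out the induction and the independence argument in more detail than the paper does.
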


We show how to use the algorithm for laminar instances to obtain an 8-approximation for the general case. To this end, we transform a general instance of stabbing into a laminar one as follows. 
We first scale, with respect to the $ x $-axis, each rectangle $ R_i $ to the right, such that its width is a power of $ 2 $. That is, if $ 2^{t-1} < w_{i} \leq 2^t $ for some integer $ t $, the new width $ w_i' $ becomes $ 2^t $. Then, we move each rectangle to the left, to the nearest multiple of its width. That is, we shift each rectangle $ R_i $ such that $ x_{i}^l $ becomes $ k w_{i}' $, where $ k=\max \{z \in \mathbb{Z}: z \cdot \ w_{i}' \leq x_i^l\} $. After this scaling and moving procedure we get a laminar family of rectangles $ \R' $. 
\begin{figure}[h]
	\centering
	\includegraphics[width=1\textwidth]{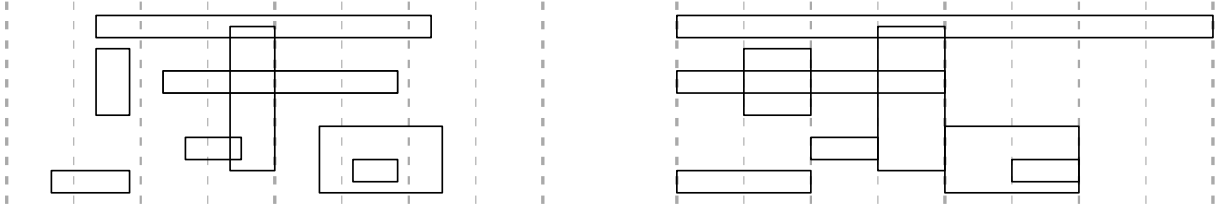}
	\caption{On the left, a general instance. On the right, the laminar instance that we obtain after the stretching and shifting procedure. The dashed gray lines denote the respective powers of $2$.}
	\label{fig:decomp}
\end{figure}

\noindent
Observe that any solution for this instance $ \R' $, when scaled by a factor of $ 2 $ to the right, yields a feasible solution for the original instance $ \R $. On the other hand, each segment in a feasible solution for $ \R $ needs to be scaled by at most a factor of $ 4 $ to get a feasible solution for $ \R' $. Indeed, for any segment $ s $ in a feasible solution for $ \R $ of the form $ s=[x_i^l,x_j^r] \times y $, with $ 2^{t-1} < x_j^r-x_i^l \leq 2^t  $, the segment $ s' $ that is obtained from $ s $ by rounding down (up) $ x_i^l $ ($ x_i^r $) to the next multiple of $ 2^t $  will stab all rectangles in $ \R' $ corresponding to rectangles from $ \R $ that were stabbed by $ s $. The length of each segment only increases by a factor of $ 4 $. 
In summary, we lose a factor of $4$ when passing to the laminar instance, and a factor of $2$ when going from a solution for the laminar instance to a feasible solution for the original one. This yields the following. 

\begin{theorem}
	In $O(n^5)$ time, we can find an $8$-approximation to \stabbing. 
\end{theorem}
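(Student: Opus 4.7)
The plan is to combine the two ingredients developed in this section: the dynamic program that solves laminar instances exactly in $O(n^5)$ time, and the stretching-and-shifting transformation $\R \mapsto \R'$ that turns any instance into a laminar one. Given $\R$, I would first construct $\R'$ by rounding each rectangle's width up to the next power of two and snapping its left edge down to the largest multiple of that width lying at or below it; then run the laminar DP on $\R'$ to obtain an optimal laminar stabbing $\S'$; finally return the set $\S$ obtained from $\S'$ by horizontally stretching every segment to the right by a factor of two.

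The approximation ratio of $8$ decomposes into two multiplicative losses, one on each side of the transformation. The first loss bounds $\OPT(\R')$ in terms of $\OPT(\R)$: starting from any optimum $\S^*$ for $\R$ and a segment $s \in \S^*$ with $2^{t-1} < |s| \le 2^t$, I would extend $s$ to the segment $s'$ whose endpoints are the nearest enclosing multiples of $2^t$. Since $|s| \le 2^t$, $s$ straddles at most two consecutive $2^t$-blocks, so $|s'| \le 2 \cdot 2^t \le 4|s|$, and one checks that $s'$ stabs, inside $\R'$, the transformed version of every rectangle stabbed by $s$ in $\R$. Summing gives $\OPT(\R') \le 4\,\OPT(\R)$. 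The second loss is the final rightward stretching by two, which exactly doubles each segment's length; one verifies that, because each rectangle $R_i' \in \R'$ has width within a factor two of $w(R_i)$ and its left edge is displaced from that of $R_i$ by less than $w(R_i')$, a stretched version of any segment stabbing $R_i'$ in $\R'$ still crosses both vertical edges of $R_i$. Hence the returned solution satisfies $|\S| \le 2\,|\S'| = 2\,\OPT(\R') \le 8\,\OPT(\R)$.

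The running time is dominated by the $O(n^5)$ DP of the previous lemma; the construction of $\R'$ and the rescaling of $\S'$ are trivially linear. The main obstacle I anticipate is the geometric bookkeeping: the two transformations --- snapping left edges down to multiples of the rounded-up width before solving, and stretching the output to the right by two afterwards --- must compose correctly on every rectangle-segment pair, so that no rectangle of the original $\R$ is missed and so that the per-segment blowups really are bounded by the claimed factors $4$ and $2$. Once this verification is carried out, the running time follows from the laminar DP and the two factors multiply to give the $8$-approximation.
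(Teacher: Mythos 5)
Your proposal is correct and follows essentially the same route as the paper: round widths up to powers of two and snap left edges down to multiples to get a laminar instance, solve it exactly with the $O(n^5)$ dynamic program, and account for a factor $4$ when transferring an optimal solution of $\R$ to $\R'$ plus a factor $2$ when stretching the laminar solution back, giving $|\S|\le 2\,\OPT(\R')\le 8\,\OPT(\R)$. The geometric verifications you flag (that the rounded segment stabs every transformed rectangle, and that the stretched segment re-stabs every original rectangle) are exactly the checks the paper performs.
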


\bibliography{citations}

\begin{thebibliography}{10}

\bibitem{adamaszek2019approximation}
A.~Adamaszek, S.~Har-Peled, and A.~Wiese.
\newblock Approximation schemes for independent set and sparse subsets of
  polygons.
\newblock {\em Journal of the ACM (JACM)}, 66(4):1--40, 2019.

\bibitem{adamaszek2013approximation}
A.~Adamaszek and A.~Wiese.
\newblock Approximation schemes for maximum weight independent set of
  rectangles.
\newblock In {\em 2013 IEEE 54th annual symposium on foundations of computer
  science}, pages 400--409. IEEE, 2013.

\bibitem{becchetti2009latency}
L.~Becchetti, A.~Marchetti-Spaccamela, A.~Vitaletti, P.~Korteweg, M.~Skutella,
  and L.~Stougie.
\newblock Latency-constrained aggregation in sensor networks.
\newblock {\em ACM Transactions on Algorithms (TALG)}, 6(1):1--20, 2009.

\bibitem{bronnimann1995almost}
H.~Br{\"o}nnimann and M.~T. Goodrich.
\newblock Almost optimal set covers in finite vc-dimension.
\newblock {\em Discrete \& Computational Geometry}, 14(4):463--479, 1995.

\bibitem{shallow}
T.~Chan, E.~Grant, J.~Könemann, and M.~Sharpe.
\newblock Weighted capacitated, priority, and geometric set cover via improved
  quasi-uniform sampling.
\newblock {\em Proceedings of the Annual ACM-SIAM Symposium on Discrete
  Algorithms}, pages 1576--1585, 05 2012.

\bibitem{chan2014exact}
T.~M. Chan and E.~Grant.
\newblock Exact algorithms and apx-hardness results for geometric packing and
  covering problems.
\newblock {\em Computational Geometry}, 47(2):112--124, 2014.

\bibitem{Stabbing_Chan}
T.~M. Chan, T.~C. van Dijk, K.~Fleszar, J.~Spoerhase, and A.~Wolff.
\newblock Stabbing rectangles by line segments - how decomposition reduces the
  shallow-cell complexity.
\newblock In W.~Hsu, D.~Lee, and C.~Liao, editors, {\em 29th International
  Symposium on Algorithms and Computation, {ISAAC} 2018, December 16-19, 2018,
  Jiaoxi, Yilan, Taiwan}, volume 123 of {\em LIPIcs}, pages 61:1--61:13.
  Schloss Dagstuhl - Leibniz-Zentrum f{\"{u}}r Informatik, 2018.

\bibitem{chvatal1979greedy}
V.~Chvatal.
\newblock A greedy heuristic for the set-covering problem.
\newblock {\em Mathematics of operations research}, 4(3):233--235, 1979.

\bibitem{clarkson2007improved}
K.~L. Clarkson and K.~Varadarajan.
\newblock Improved approximation algorithms for geometric set cover.
\newblock {\em Discrete \& Computational Geometry}, 37(1):43--58, 2007.

\bibitem{gmmn}
A.~Das, K.~Fleszar, S.~G. Kobourov, J.~Spoerhase, S.~Veeramoni, and A.~Wolff.
\newblock Polylogarithmic approximation for generalized minimum manhattan
  networks.
\newblock In {\em 29th European Workshop on Computational Geometry}, pages
  81--84, 2013.

\bibitem{haussler1987}
D.~Haussler and E.~Welzl.
\newblock $\varepsilon$-nets and simplex range queries.
\newblock {\em Discrete \& Computational Geometry}, 2(2):127--151, 1987.

\bibitem{hochbaum1985approximation}
D.~S. Hochbaum and W.~Maass.
\newblock Approximation schemes for covering and packing problems in image
  processing and vlsi.
\newblock {\em Journal of the ACM (JACM)}, 32(1):130--136, 1985.

\bibitem{lovasz1975ratio}
L.~Lov{\'a}sz.
\newblock On the ratio of optimal integral and fractional covers.
\newblock {\em Discrete mathematics}, 13(4):383--390, 1975.

\bibitem{matouvsek1990net}
J.~Matou{\v{s}}ek, R.~Seidel, and E.~Welzl.
\newblock How to net a lot with little: Small $\varepsilon$-nets for disks and
  halfspaces.
\newblock In {\em Proceedings of the sixth annual symposium on Computational
  geometry}, pages 16--22, 1990.

\bibitem{mitchell2021approximating}
J.~S. Mitchell.
\newblock Approximating maximum independent set for rectangles in the plane.
\newblock {\em arXiv preprint arXiv:2101.00326}, 2021.

\bibitem{mustafa2014settling}
N.~H. Mustafa, R.~Raman, and S.~Ray.
\newblock Settling the apx-hardness status for geometric set cover.
\newblock In {\em 2014 IEEE 55th Annual Symposium on Foundations of Computer
  Science}, pages 541--550. IEEE, 2014.

\bibitem{mustafa2015quasi}
N.~H. Mustafa, R.~Raman, and S.~Ray.
\newblock Quasi-polynomial time approximation scheme for weighted geometric set
  cover on pseudodisks and halfspaces.
\newblock {\em SIAM Journal on Computing}, 44(6):1650--1669, 2015.

\bibitem{mustafa2010improved}
N.~H. Mustafa and S.~Ray.
\newblock Improved results on geometric hitting set problems.
\newblock {\em Discrete \& Computational Geometry}, 44(4):883--895, 2010.

\bibitem{pilipczuk2020quasi}
M.~Pilipczuk, E.~J. Van~Leeuwen, and A.~Wiese.
\newblock Quasi-polynomial time approximation schemes for packing and covering
  problems in planar graphs.
\newblock {\em Algorithmica}, pages 1--37, 2020.

\bibitem{pyrga2008new}
E.~Pyrga and S.~Ray.
\newblock New existence proofs $\varepsilon$-nets.
\newblock In {\em Proceedings of the twenty-fourth annual symposium on
  Computational geometry}, pages 199--207, 2008.

\bibitem{Varaquasiunif}
K.~Varadarajan.
\newblock Weighted geometric set cover via quasi-uniform sampling.
\newblock In {\em Proceedings of the Forty-Second ACM Symposium on Theory of
  Computing}, STOC '10, page 641–648, New York, NY, USA, 2010. Association
  for Computing Machinery.

\end{thebibliography}
\clearpage
\appendix
\newcommand{\inappendix}{yes!}
\end{document}
